\let\theoremstyle\relax
\algrenewcommand\textproc{}
\newcolumntype{M}[1]{>{\centering\arraybackslash}m{#1}}
\theoremstyle{plain}
\newtheorem{lemma}{\textbf{Lemma}}
\newtheorem{theorem}{\textbf{Theorem}}
\newtheorem{corollary}{\textbf{Corollary}}
\newtheorem{assumption}{Assumption}
\newtheorem{problem}{\textbf{Problem}}
\theoremstyle{definition}
\newtheorem{definition}{\textbf{Definition}}
\newtheorem{remark}{Remark}
\newtheorem{example}{\textbf{Example}}
\useunder{\uline}{\ul}{}
\newcommand{\Lc}{\mathcal{L}}
\newcommand{\Ac}{\mathcal{A}}
\DeclareMathOperator*{\Imag}{Im}
\DeclareMathOperator*{\Real}{Re}
\DeclarePairedDelimiter\ceil{\lceil}{\rceil}
\begin{document}
\begin{frontmatter}

    \title{Data-driven Reachability Verification with Probabilistic
    Guarantees under Koopman Spectral Uncertainty\thanksref{footnoteinfo}}

    \thanks[footnoteinfo]{
        We acknowledge the financial support of the Finnish Ministry
        of Education and Culture through the Intelligent Work
        Machines Doctoral Education Pilot Program (IWM VN/3137/2024-OKM-4).
    }

    \author[First]{Jianqiang Ding}
    \author[First]{Shankar A. Deka}

    \address[First]{Department of Electrical Engineering, Aalto
    University, Finland. (e-mail: {\{jianqiang.ding, shankar.deka\}}@aalto.fi).}


\begin{abstract}
    Providing rigorous reachability guarantees for unknown complex
    systems is a crucial and challenging task. In this paper, we
    present a novel data-driven framework that addresses this
    challenge by leveraging Koopman operator theory.
    Instead of operating in the state space, the proposed method
    encodes model uncertainty from finite data directly into Koopman
    spectral representation with quantifiable error bounds.
    Leveraging this spectral information, we systematically determine
    time intervals within which trajectories from the initial set are
    guaranteed, with a prescribed probability, to reach the target
    set.
    We finally demonstrate the efficacy of our framework in
    numerical examples.
\end{abstract}

    \begin{keyword}
        Data-driven control theory,
        Reachability analysis, Formal Verification, Koopman Operator,
        Nonlinear Systems
    \end{keyword}

\end{frontmatter}


\section{Introduction}


With the increasing integration of complex autonomous systems, such
as autonomous vehicles, robotics, and smart grids, into
safety-critical scenarios, guaranteeing their operational safety and
reliability has become a paramount challenge in modern control engineering.
Fundamentally, verifying whether a dynamical system satisfies these operational
specifications can be framed as a reachability problem. For instance, collision
avoidance equates to ensuring that state trajectories never enter an unsafe set,
while task completion corresponds to guaranteeing that the system eventually reaches
a designated target set.

Historically,
reachability verification
problem
has been
extensively studied under the assumption that
a precise dynamical model is available.
Prominent approaches
include set-propagation methods using zonotopes
\citep{alanwar2023data},
backward reachability
analysis
via
Hamilton-Jacobi-Isaacs (HJI) equations \citep{herbert2019reachability},
and verification via the
construction of barrier certificates through Sum-of-Squares (SOS)
optimization techniques
\citep{prajna2007convex}.
However, this reliance on precise analytical models presents
a severe practical limitation. Deriving models from first principles for complex
real-world systems is often either impossible due to unknown dynamics, or results
in models too computationally intractable for formal analysis.
This fundamental gap between model-based theory and the realities of
complex engineering systems has spurred a paradigm shift to
data-driven approaches, which aim to infer system properties directly
from observed data, bypassing the identification of an explicit model.

Recent advancements in data-driven reachability analysis have
established a diverse landscape of methodologies.
For instance, scenario optimization approaches like
\citep{devonport2020estimating}
formulate the
determination of reachable set as a chance-constrained optimization
problem, which yields explicit probabilistic guarantees but often at
the cost of high sample complexity.
Surrogate-model approaches construct reachable set estimates
with probabilistic guarantees, using techniques such as Gaussian
processes \citep{devonport2020data} or conformal prediction
\citep{tebjou2023data}.
In contrast, methods
seeking
deterministic guarantees, such as
those using Zonotopes \citep{alanwar2021data}, focus on
constructing over-approximation of the true reachable set but are
typically conservative for verification. While emerging techniques
using Neural PDE solvers like DeepReach
\citep{bansal2020deepreach} promise high scalability, their assurances
currently remain largely empirical.
Fundamentally
,
these approaches
all embed
data-driven uncertainty into the geometric representation
of the reachable set, thereby inheriting the limitations of set-based
computations.
Even when inferring the dynamical model from finite noise-free data,
the intrinsic uncertainty presents
a critical problem in the field that remains largely open:
\textit{how can we, under model uncertainty inherent in finite data,
    rigorously verify the reachability of the underlying dynamical system
    in general settings, such as between non-convex sets or over long,
potentially infinite, time horizons?}

To address this challenge, this paper introduces a novel data-driven reachability
verification framework based on the Koopman spectrum. Diverging from conventional
methods, our paradigm \citep{ding2024time} encodes
the underlying dynamic uncertainty into Koopman spectral representations
equipped with rigorous error bounds.
Levering these spectral information,
we verify the reachability specification by computing the probability
of the existence of a non-empty time interval, during which
trajectories starting from an initial set can reach a target set.


\section{Preliminaries}

\textit{Notations:}
$\mathbb{R}$, $\mathbb{C}$, and $\mathbb{R}^n$ denote real
numbers, complex numbers, and $n$-dimensional real space, respectively.
$\mathcal{C}^k(X)$ denotes the space of $k$-times continuously
differentiable functions on $X$.
$\Real(\cdot), \Imag(\cdot),
|\cdot|, \angle\cdot$ denote the real part, imaginary part,
absolute value, and phase angle.
$\|\cdot\|$ is the Euclidean norm on $\mathbb{R}^n$, $\mathcal{B}_r(z)
\doteq \{x : \|x - z\| < r\}$ the open ball of radius $r$ at $z$, $\mathrm{vol}
(S)$ the Lebesgue measure of $S \subseteq \mathbb{R}^n$, and
$\omega_n \doteq \pi^{n/2}/\Gamma(n/2+1)$ the volume of
$\mathcal{B}_1(0)$.

In this paper, we consider continuous-time dynamical systems of the form
\begin{equation}
    \label{definition:dynamic system}
    \frac{d}{dt}x(t) = f(x(t))
\end{equation}
where $f: X \rightarrow \mathbb{R}^n$ is an unknown continuously
differentiable map and $x(t) \in \mathbb{R}^n$ denotes the state in
the compact set $X \in \mathbb{R}^n$ at time $t \geq 0$.
The flow map $s_t: X \rightarrow X$ of system
\eqref{definition:dynamic system} is given by $s_t(x)=x+\int_0^t
f(x(\tau)) d\tau$, for all $t \geq 0$ and $x \in X$.
Throughout this paper, the explicit form of dynamical system
\eqref{definition:dynamic system} is considered to be unknown.
\begin{assumption} \label{assumption:setup}
    We make the following assumptions,
    \begin{enumerate}[label=(\roman*), leftmargin=*, nosep]
        \item A dataset $\mathcal{D}=\{(\bm{x}_k,
            \bm{y}_k)\}_{k=1}^N$ is available with $\bm{y}_k =
            s_{\Delta t}(\bm{x}_k)$, $\Delta t > 0$, and $\bm{x}_k
            \stackrel{\textnormal{i.i.d.}}{\sim} \mu$ for some
            measure $\mu$ on the compact set $X$ that is absolutely
            continuous w.r.t.\ the Lebesgue measure.
        \item For each approximated principal eigenfunction
            $\tilde{\psi}$, $\log|\tilde{\psi}|$ and
            $\angle\tilde{\psi}$ are Lipschitz on $X$ with known
            upper bounds $\bar L_{\mathcal{L}}$ and $\bar
            L_{\mathcal{A}}$ on their Lipschitz constants.
    \end{enumerate}
\end{assumption}
Under Assumption \ref{assumption:setup},
we are interested in the problem formulated as follows.\\
\begin{problem}
    (Data-driven Reachability Verification)
    Consider a dynamical system \eqref{definition:dynamic system}
    known only through the dataset $\mathcal{D}$. Given an initial
    set $X_0 \subset X$ and a target set $X_F \subset X$,
    reachability verification aims to determine whether there exists
    some time $t \geq 0$ and $x_0 \in X_0$ such that $s_t(x_0) \in X_F$.
\end{problem}

In the following, we review the fundamentals of Koopman principal eigenpairs,
demonstrate how the derived reach-time bounds enable reachability
verification, and investigate the eigenfunction approximation error bounds.

\subsection{Koopman operator and principal eigenpairs}

Let $\mathcal{F}$ be a Banach space of scalar-valued functions
$\phi(\bm{x}): X \rightarrow \mathbb{C}$. The Koopman operator
$\mathbb{U}_t: \mathcal{F} \rightarrow \mathcal{F}$ corresponding to
the system \eqref{definition:dynamic system} is defined as
$[\mathbb{U}_t \phi](\bm{x}) = \phi(s_t(\bm{x}))$, where
$\phi(\bm{x}) \in \mathcal{F}$ denotes an observable function.
\begin{definition} (Koopman Eigenvalues and Eigenfunctions)
    An observable function $\phi(x) \in \mathcal{F}$ is said to be an
    eigenfunction of the Koopman operator corresponding to the
    eigenvalue $\lambda$ if
    \begin{equation}
        \label{eq: eigens}
        [\mathbb{U}_t \phi](x) = e^{\lambda t}\phi(x), \ t \geq 0.
    \end{equation}

    With the Koopman generator, $\mathcal{K}_f$, equation (\ref{eq:
    eigens}) can be written as
    \begin{equation}
        \label{eq: koopman eigens}
        \mathcal{K}_f \phi = \frac{\partial \phi}{\partial x} f(x) =
        \lambda \phi(x)
    \end{equation}

    In addition to being defined for all $t \in [0,\infty)$ and $x
    \in X$, the Koopman spectrum can be generalized to finite time
    and the subset of state space as open and subdomain eigenfunctions.
    Furthermore, we define the set of principal eigenpairs as the
    minimal generator $G$ of the set given by
    \begin{equation*}
        E =
        \left\{\left(\sum_{i=1}^{m}n_i\lambda_i,\prod_{i=1}^{m}\phi_i^{n_i}\right)
            \bigg\lvert \;(\lambda_i,\phi_i) \in G,\; n_i \in \mathbb{N}
        \right\}.
    \end{equation*}
    where $E$ is the semigroup of eigenpairs $(\lambda, \phi)$. \\
\end{definition}
\begin{remark}
    The concept of principal eigenfunctions introduced in
    \citep{mohr2016koopman, kvalheim2021existence}
    yields a countably infinite set of
    eigenfunctions. To form a richer basis, this notion is extended
    to `primary eigenfunctions' in
    \citep{bollt2021geometric}, by allowing real-valued exponents. In
    this paper, we adopt this more general primary eigenfunction
    definition for reachability verification.
\end{remark}


\subsection{Time-to-reach bounds using the Koopman spectrum}
Given any function $g: X \rightarrow \mathbb{C}$ and sets $V,W
\subset X$, we define the following notations for convenience,
\begin{gather}
    \overline{g}_{V} \doteq \sup_{x\in V}|g(x)|,\quad
    \underline{g}_{V} \doteq \inf_{x\in V}|g(x)|
    \notag \\
    \Lc^{g}_{W,V} \doteq \log\!\left(
    \frac{\overline{g}_{V}}{\underline{g}_{W}} \right), \quad
    \Ac^{g}_{W,V} \doteq \overline{\angle g}_{V} -
    \underline{\angle g}_{W}. \notag
\end{gather}
We recall the following results from \citep{ding2024time} for
reachability verification with reach-time bounds.
\begin{theorem}
    \label{th:relaxation} Let $V, W \subseteq X$ be non-empty
    compact sets, and let $I^{mag}_{V,W}(\lambda,\psi)$ be the
    time-to-reach bounds from $V$ to $W$ with magnitudes of
    non-trivial eigenpair $(\lambda,\psi) \in E$, where
    $\psi=\prod_{i=1}^{n}\psi_{i}^{\alpha_i}$,
    $\lambda = \sum_{i=1}^{n}\alpha_{i}\lambda_{i}$, $\alpha_i \ge
    0$, and $(\lambda_i,\psi_i) \in G$ are principal eigenpairs.
    Then $I^{mag}_{V,W}(\lambda,\psi) \subseteq \hat{I}^{mag}_{V,W}
    (\lambda,\psi)$ with
    \vspace{-0.5em}
    \begin{enumerate}
        \item[(a)] For $\Real{(\lambda)}>0$, $\hat{I}^{mag}_{V,W}(
            \lambda,\psi) \doteq$
            \[
                \hspace{0.8em}\left[
                    \frac{\sum_{i=1}^{n}\alpha_{i}\Lc^{\psi_i}_{W,V}}{
                    -\sum_{i=1}^{n}\alpha_{i}\Real{(\lambda_{i})}},\,
                    \frac{\sum_{i=1}^{n}\alpha_{i}\Lc^{\psi_i}_{V,W}}{
                    \sum_{i=1}^{n}\alpha_{i}\Real{(\lambda_{i})}}
                \right]
            \]

        \item[(b)] For $\Real{(\lambda)}<0$, $\hat{I}^{mag}_{V,W}(
            \lambda,\psi) \doteq$
            \[
                \hspace{0.8em}\left[
                    \frac{\sum_{i=1}^{n}\alpha_{i}\Lc^{\psi_i}_{V,W}}{
                    \sum_{i=1}^{n}\alpha_{i}\Real{(\lambda_{i})}},\,
                    \frac{\sum_{i=1}^{n}\alpha_{i}\Lc^{\psi_i}_{W,V}}{
                    -\sum_{i=1}^{n}\alpha_{i}\Real{(\lambda_{i})}}
                \right].
            \]
    \end{enumerate}
\end{theorem}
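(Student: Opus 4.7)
The plan is to start from the defining semigroup property of the product eigenpair $(\lambda,\psi)$: since $\psi=\prod_{i=1}^{n}\psi_i^{\alpha_i}$ and $\lambda=\sum_{i=1}^{n}\alpha_i\lambda_i$, iterating \eqref{eq: eigens} factor-wise gives $[\mathbb{U}_t\psi](x)=e^{\lambda t}\psi(x)$. Taking magnitudes and then logarithms produces the scalar identity
\[
\sum_{i=1}^{n}\alpha_i\bigl(\log|\psi_i(s_t(x))|-\log|\psi_i(x)|\bigr)=\Real(\lambda)\,t.
\]
Hence, any admissible $t\in I_{mag}(\lambda,\psi)$ must come from some pair $x\in X_0$, $y=s_t(x)\in X_F$ via
\[
t=\frac{\sum_{i=1}^{n}\alpha_i\bigl(\log|\psi_i(y)|-\log|\psi_i(x)|\bigr)}{\sum_{i=1}^{n}\alpha_i\Real(\lambda_i)}.
\]

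Next I would over-approximate $I_{mag}(\lambda,\psi)$ by decoupling $y$ from $s_t(x)$ and letting $(x,y)$ vary independently over $X_0\times X_F$; this relaxation is what yields a containment rather than equality. Because each $\alpha_i\ge 0$, the joint extrema of the numerator admit the componentwise bounds
\[
\sup_{x\in X_0,\,y\in X_F}\sum_{i}\alpha_i\bigl(\log|\psi_i(y)|-\log|\psi_i(x)|\bigr)\le\sum_{i}\alpha_i\Lc^{\psi_i}(X_0,X_F),
\]
and the symmetric infimum is lower-bounded by $-\sum_i\alpha_i\Lc^{\psi_i}(X_F,X_0)$, since $\log(\overline{\psi_i}(X_F)/\underline{\psi_i}(X_0))=\Lc^{\psi_i}(X_0,X_F)$ and $\log(\underline{\psi_i}(X_F)/\overline{\psi_i}(X_0))=-\Lc^{\psi_i}(X_F,X_0)$ follow directly from the definition of $\Lc$.

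Finally, in case (a) the denominator $\sum_i\alpha_i\Real(\lambda_i)>0$ preserves the inequalities on division, so the extremum pair $\bigl(-\sum_i\alpha_i\Lc^{\psi_i}(X_F,X_0),\,\sum_i\alpha_i\Lc^{\psi_i}(X_0,X_F)\bigr)$ yields exactly the interval endpoints stated. In case (b) the denominator is negative, so division swaps the two endpoints, producing the second formula. The main technical point is the combined relaxation where the joint extremum of a sum is replaced by the sum of componentwise extrema — this step is legitimate precisely because $\alpha_i\ge 0$ and each $|\psi_i|$ can be extremised independently over $X_0$ and $X_F$. A side concern is ensuring $\log|\psi_i|$ is finite on both sets, i.e.\ none of the principal eigenfunctions vanishes on $X_0\cup X_F$; this non-vanishing condition should be inherited from the underlying definition of $I_{mag}$ from \cite{ding2024time}.
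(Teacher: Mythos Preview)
Your proposal is correct and follows essentially the same approach as the paper's proof: both exploit the eigenfunction identity $|\psi(s_T(x_0))|=e^{\Real(\lambda)T}|\psi(x_0)|$, take logarithms, bound the resulting expression over $X_0\times X_F$, and then use $\alpha_i\ge 0$ together with the sup/inf-of-sum $\le$/$\ge$ sum-of-sup/inf inequality to pass to the principal eigenfunctions. The only cosmetic difference is ordering: the paper first derives $\Real(\lambda)T\in[-\Lc^{\psi}(X_F,X_0),\Lc^{\psi}(X_0,X_F)]$ for the composite $\psi$ and then relaxes $\Lc^{\psi}$ via the product structure, whereas you decompose $\log|\psi|=\sum_i\alpha_i\log|\psi_i|$ at the outset and bound componentwise in one step; your observation about the implicit non-vanishing of $|\psi_i|$ on $X_0\cup X_F$ is a valid technical caveat that the paper leaves tacit.
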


\begin{theorem}
    \label{th:complex_relaxation}
    Let $V, W \subseteq X$ be non-empty compact sets, and let
    $I^{phase}_{V,W}(\lambda,\psi)$ be the collection of
    time-to-reach bounds from $V$ to $W$ with phases of non-trivial complex
    eigenpair $(\lambda,\psi) \in E$,
    parameterized by $\psi = \prod_{i=1}^{n}\psi_{i}^{\alpha_i}$
    and $\lambda = \sum_{i=1}^{n}\alpha_{i}\lambda_{i}$ with
    $\alpha_{i}\ge 0$. Then $I^{phase}_{V,W}(\lambda,\psi)
    \subseteq \hat{I}^{phase}_{V,W}(\lambda,\psi)$ with
    \vspace{-0.5em}
    \begin{enumerate}
        \item [(a)] For $\Imag{(\lambda)}>0$ and some $m \in
            \mathbb{Z}$, $\hat{I}^{phase}_{V,W}(\lambda,\psi)
            \doteq$
            \begin{gather*}
                \Bigg[
                    \frac{\sum_{i=1}^{n}\alpha_{i}\Ac^{\psi_i}_{V,W} +
                    2m\pi}{\sum_{i=1}^{n}\alpha_{i}\Imag{(\lambda_i)}}
                    ,
                    \frac{-\sum_{i=1}^{n}\alpha_{i}\Ac^{\psi_i}_{W,V} +
                    2m\pi}{\sum_{i=1}^{n}\alpha_{i}\Imag{(\lambda_i)}}
                \Bigg]
            \end{gather*}

        \item[(b)] For $\Imag{(\lambda)}<0$ and some $m \in
            \mathbb{Z}$, $\hat{I}^{phase}_{V,W}(\lambda,\psi)
            \doteq$
            \begin{gather*}
                \Bigg[
                    \frac{\sum_{i=1}^{n}\alpha_{i}\Ac^{\psi_i}_{W,V} +
                    2m\pi}{-\sum_{i=1}^{n}\alpha_{i}\Imag{(\lambda_i)}}
                    ,
                    \frac{\sum_{i=1}^{n}\alpha_{i}\Ac^{\psi_i}_{V,W} +
                    2m\pi}{\sum_{i=1}^{n}\alpha_{i}\Imag{(\lambda_i)}}
                \Bigg].
            \end{gather*}
    \end{enumerate}
\end{theorem}

Let $\hat{I}_{V,W}(\lambda,\psi) \doteq \hat{I}^{mag}_{V,W}(
\lambda,\psi) \cap \hat{I}^{phase}_{V,W}(\lambda,\psi)$ denote the
computed reach-time bounds from $V$ to $W$ with eigenpair $(\lambda,\psi)$.
A necessary condition for reachability verification is then given by
the following corollary.
\begin{corollary}\label{cor:main}
    For the dynamical system \eqref{definition:dynamic system}, a
    necessary condition for a target set $X_{F} \subset X$ to be
    reachable from an initial set $X_{0} \subset X$ is that the
    intersection of all intervals $\hat{I}_{X_0,X_F}(\lambda,\psi)$
    is non-empty. In other words,
    \begin{gather*}
        \left\{s_t(x_0) \,|\, x_0 \in X_0 \right\} \bigcap X_F \neq
        \emptyset \text{ for some } t>0 \\
        \implies \bigcap_{(\lambda,\psi)\in E}
        \hat{I}_{X_0,X_F}(\lambda,\psi) \neq \emptyset.
    \end{gather*}
\end{corollary}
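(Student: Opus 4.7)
The plan is to show that whenever reachability holds---i.e., there exist $x_0 \in X_0$ and $t^* \ge 0$ with $s_{t^*}(x_0) \in X_F$---the very time $t^*$ itself belongs to every interval $\hat{I}(\lambda,\psi)$. Once this is established, the intersection in the statement is automatically non-empty, so the entire corollary reduces to a membership check performed one eigenpair at a time, with the two relaxation theorems doing most of the heavy lifting.

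First, I would fix an arbitrary eigenpair $(\lambda,\psi) \in E$ and invoke the defining Koopman identity $\psi(s_t(x)) = e^{\lambda t}\psi(x)$ (which, by construction of principal/primary eigenpairs, holds on all of $X$, and in particular on $X_0$ and $X_F$). Specializing to $x = x_0$ and $t = t^*$ yields
\[
\psi(s_{t^*}(x_0)) = e^{\lambda t^*}\,\psi(x_0),
\]
and separating magnitude and phase gives
\[
|\psi(s_{t^*}(x_0))| = e^{\Real(\lambda)\,t^*}\,|\psi(x_0)|,
\]
\[
\angle \psi(s_{t^*}(x_0)) - \angle \psi(x_0) = \Imag(\lambda)\,t^* + 2\pi m,
\]
for some integer $m \in \mathbb{Z}$.

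Second, since $x_0 \in X_0$ and $s_{t^*}(x_0) \in X_F$, the four scalars $|\psi(x_0)|$, $|\psi(s_{t^*}(x_0))|$, $\angle\psi(x_0)$, $\angle\psi(s_{t^*}(x_0))$ are sandwiched between the $\inf$/$\sup$ quantities defining $\Lc^{\psi}(\cdot,\cdot)$ and $\Ac^{\psi}(\cdot,\cdot)$. Solving each of the two displayed identities for $t^*$ and bounding the resulting expressions by these $\inf$/$\sup$ values---with a case split on the sign of $\Real(\lambda)$ (to avoid flipping the inequalities when inverting the magnitude equation) and a choice of $m \in \mathbb{Z}$ for the phase equation---places $t^*$ inside the true reach-time intervals $I_{mag}(\lambda,\psi)$ and $I_{phase}(\lambda,\psi)$. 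By Theorems~\ref{th:relaxation} and~\ref{th:complex_relaxation}, these are contained in $\hat{I}_{mag}(\lambda,\psi)$ and $\hat{I}_{phase}(\lambda,\psi)$ respectively, so $t^* \in \hat{I}(\lambda,\psi)$. Because the eigenpair was arbitrary, $t^*$ lies in $\bigcap_{(\lambda,\psi) \in E}\hat{I}(\lambda,\psi)$, proving non-emptiness.

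The only delicate ingredient is the second step: one must carefully track the sign of $\Real(\lambda)$ and $\Imag(\lambda)$ to determine which of the two sub-cases of each theorem applies, and select the integer $m$ that lifts the phase identity from $\mathbb{R}/2\pi\mathbb{Z}$ back to $\mathbb{R}$. This is exactly what the case splits in Theorem~\ref{th:relaxation} and the family of intervals parameterized by $m$ in Theorem~\ref{th:complex_relaxation} are designed to absorb, so the remaining content of the corollary is essentially a direct consequence of the Koopman eigenfunction equation.
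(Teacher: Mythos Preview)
Your proposal is correct and mirrors the paper's approach exactly: the paper does not prove the corollary separately but treats it as immediate from Theorems~\ref{th:relaxation} and~\ref{th:complex_relaxation}, whose appendix proofs begin precisely by assuming $s_T(x_0)\in X_F$ and then showing that this same $T$ satisfies the magnitude and phase constraints for every eigenpair. Your decomposition into the Koopman identity, separation of modulus and argument, sandwiching by the $\inf/\sup$ quantities, and invocation of the two over-approximation theorems is exactly how the paper's argument unfolds.
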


\begin{remark} \label{rem:vanish_decomp}
    Theorems \ref{th:relaxation} and \ref{th:complex_relaxation}
    involve $\log|\psi_i|$ and thus require $|\psi_i|$ to be bounded
    away from zero on $V \cup W$, vanishing $\psi_i$ degenerates
    $\hat{I}_{V,W}(\lambda_i, \psi_i)$ to $\mathbb{R}$ and is
    excluded.
    The proposed framework estimates the exact reach-time
    interval $I^*_{X_0, X_F}$ by intersecting the over-approximations
    $\hat{I}_{X_0, X_F}(\lambda, \psi)$ across all $(\lambda, \psi) \in E$, which
    tightens but does not eliminate the conservatism.
    Since the global error admits the exact decomposition
    \begin{equation*}
        \hat{I}_{X_0, X_F} \setminus I^*_{X_0, X_F} =
        \bigcup_{\bm{\alpha}}\bigl( I^e_{\bm{\alpha}} \cap
            \bigcap_{\bm{\beta} \neq \bm{\alpha}}
        \hat{I}_{\bm{\beta}} \bigr),
    \end{equation*}
    where $I^e_{\bm{\alpha}}$ denotes the error contributed by the
    composite eigenfunction with
    $\bm{\alpha}$,
    each $\hat{I}_{\bm{\alpha}}$ contributes to the global error whenever its
    over-approximation is not filtered out by the intersection with
    the other bounds. An inaccurate principal eigenpair therefore
    propagates to every composite in which it appears and inflates
    the estimate.
\end{remark}


\subsection{Error bounds on approximating Koopman spectral
properties}\label{subsec:Koopman_spectral_errors}
The data-driven approximation of Koopman spectral properties is
fundamentally rooted in the approximation of the Koopman operator itself.
Spurred by the evolution from Dynamic Mode Decomposition (DMD) \citep{schmid2010dynamic}
to Extended DMD (EDMD) \citep{williams2015data} and kernel-based variants
\citep{williams2014kernel,klus2020kernel}, this field has matured significantly,
shifting from exploratory algorithms to comprehensive frameworks grounded
in rigorous error analysis.
Seminal work by \citep{korda2018linear} proved the convergence of EDMD
to the true operator in the finite-data limit.
Probabilistic bounds with convergence rates for i.i.d. and ergodic
sampling, respectively, were established in \citep{philipp2023error}.
More recently,
$L^\infty$ error bounds \citep{kohne2025error} for
kEDMD
through interpolation theory.
However, an accurate operator approximation does not guarantee
accurate spectral properties, a challenge primarily attributed to
spectral pollution.
A critical development to address this challenge is Residual DMD
(ResDMD) \citep{colbrook2024rigorous}.
This method provides a framework
for computing the spectrum free of spectral pollution
and offers a computable metric to evaluate the residual of each eigenpair.
A key result, Theorem 4.1 in
\citep{colbrook2024rigorous}, establishes that the computed
eigenvalues converge to the true spectrum as the amount of data and
the dictionary size increase.
The theoretical origins of spectral pollution were further elucidated
in \citep{kostic2023sharp} by the concept of metric distortion, which
reveals that spectral error is amplified by a geometric factor
related to the chosen function space.
For systems with local analytical structure, high-precision methods
such as Analytical EDMD in \citep{mauroy2024analytic} and JetDMD in
\citep{ishikawa2024koopman} have been developed, leveraging Taylor or
jet-based expansions to achieve strong convergence guarantees for the
eigenfunctions.
These theoretical advancements establish that, under suitable
assumptions
regarding
data and basis functions, the error of a
data-driven Koopman eigenpair approximation can be rigorously
bounded. To formalize our results under spectral uncertainty
as illustrated in Assumption \ref{assumption:setup}(ii), we
consider the corresponding
eigenfunction approximation error in the
space $\mathcal{F}$ to be bounded by a constant $\delta_\psi \geq 0$,
\begin{align} \label{eq:eigenfunction_error_bound}
    \| \varepsilon(x) \psi(x) -  \psi(x) \|_{\mathcal{F}} \leq \delta_\psi
\end{align}
where $\varepsilon(x)$ represents the local relative error of
$\psi(x)$ to its approximation $\tilde{\psi}(x)$
after optimal scaling alignment
.

\section{Data-driven Reachability Verification under Spectral Uncertainty}

In this section,
we present the main technical contribution of this paper by deriving
rigorous formal guarantees for reachability verification with
uncertain principal Koopman eigenpairs.
By quantifying the combined impact of uncertainties from sampling and eigenpairs,
we establish probabilistic guarantees on the reach-time bounds.
Given an uncertain Koopman eigenpair
$(\tilde{\lambda},\tilde{\psi})$ satisfying
Assumption \ref{assumption:setup}(ii)
and bounded as in
\eqref{eq:eigenfunction_error_bound}, the quantities
$\mathcal{L}^{\tilde{\psi}}$ and $\mathcal{A}^{\tilde{\psi}}$ can be
bounded as follows,
\color{red}
\color{black}

\begin{lemma} \label{lemma:impact of inaccurate eigenpairs}
    Suppose the inaccurate eigenfunction $\tilde{\psi}(x) =
    \varepsilon(x) \psi(x)$ satisfies the multiplicative error model
    in \eqref{eq:eigenfunction_error_bound}. Then for any compact
    sets $W, V \subseteq X$ and any $\mathcal{T} \in
    \{\mathcal{L}, \mathcal{A}\}$,
    \begin{equation*}
        -\mathcal{T}^{\varepsilon}_{V,W} \;\le\;
        \mathcal{T}^{\tilde\psi}_{W,V} - \mathcal{T}^{\psi}_{W,V}
        \;\le\; \mathcal{T}^{\varepsilon}_{W,V}.
    \end{equation*}
\end{lemma}
\vspace{-1em}
\begin{proof}
    Under \eqref{eq:eigenfunction_error_bound}, both
    $\log|\tilde\psi| = \log|\psi| + \log|\varepsilon|$ and
    $\angle\tilde\psi = \angle\psi + \angle\varepsilon$ hold. Let
    $h$ denote either $\log|\psi|$ or $\angle\psi$ depending on
    $\mathcal{T}$, with $e$ the corresponding error term, so that
    $\mathcal{T}^{\tilde\psi}_{W,V} = \sup_V (h+e) - \inf_W (h+e)$.
    Separating the supremum and infimum yields
    \begin{align*}
        \mathcal{T}^{\tilde\psi}_{W,V}
        &\le [\sup_V h + \sup_V e] - [\inf_W h + \inf_W e]
        = \mathcal{T}^{\psi}_{W,V} + \mathcal{T}^{\varepsilon}_{W,V}, \\ \vspace{-0.5em}
        \mathcal{T}^{\tilde\psi}_{W,V}
        &\ge [\sup_V h + \inf_V e] - [\inf_W h + \sup_W e]
        = \mathcal{T}^{\psi}_{W,V} - \mathcal{T}^{\varepsilon}_{V,W}.
    \end{align*}
    This completes the proof.
\end{proof}
Next, we proceed with the analysis of the impact of sampling
uncertainty on the verification framework.

\begin{lemma} \label{lemma:extremum_sampling}
    Let $h : S \to \mathbb{R}$ be Lipschitz on a non-empty compact
    $S \subseteq X$ with Lipschitz constant upper bound $\bar{L}$.
    Let $h^\star \doteq \max_{x \in
    S} h(x)$ attained at $x^\star$, and let $\tilde h^\star_N$ be
    the empirical maximum over $N$ i.i.d.\ samples from $S$.
    For any $\sigma \in (0,1)$ and any $\epsilon > 0$ small enough
    such that $\mathcal{B}_{\epsilon/\bar L}(x^\star) \subseteq S$, the inequality $\tilde h^\star_N \ge h^\star - \epsilon$ holds with probability
    at least $1 - \sigma$ if
    \begin{equation} \label{eq:N_generic_extremum}
        N \;\ge\; \ceil{ \frac{\mathrm{vol}(S)}{\omega_n}
            \left( \frac{\bar L}{\epsilon} \right)^{\! n}
        \log\frac{1}{\sigma} }.
    \end{equation}
\end{lemma}
\vspace{-1em}
\begin{proof}
    Since $h^\star - h(x) \le \bar L \|x - x^\star\|$ for all
    $x \in S$, then $\mathcal{S}_{\epsilon} \doteq \{x \in S : h^\star - h(x) \le \epsilon\}$ contains
    $\mathcal{B}_{\epsilon/\bar L}(x^\star)$, whose $\mu$-mass $P_{\epsilon}$ satisfies
    $P_{\epsilon} \geq \underline{P}_{\epsilon} \doteq \omega_n (\epsilon/\bar
    L)^n / \mathrm{vol}(S)$. The event
    $\tilde h^\star_N < h^\star - \epsilon$ requires all $N$
    samples to miss $\mathcal{S}_{\epsilon}$, occurring with
    probability at most $(1 - \underline{P}_{\epsilon})^N$.
    Enforcing $(1 - \underline{P}_{\epsilon})^N \le \sigma$ and
    using $-\log(1-p) \ge p$ for $p \in [0,1)$ yields
    \eqref{eq:N_generic_extremum}.
\end{proof}

\begin{lemma} \label{lemma:impact of sampling}
    Let $W, V \subseteq X$ be non-empty compact sets and let
    $\tilde{\mathcal{T}}^{\psi}$ denote the empirical estimate of
    $\mathcal{T}^{\psi}$ from $N$ i.i.d.\ samples drawn from $\mu$
    on each set, with $\mathcal{T} \in \{\mathcal{L}, \mathcal{A}\}$.
    Define $\bar L_{\max} \doteq \max\{\bar L_{\mathcal{L}}, \bar
    L_{\mathcal{A}}\}$
    \footnote{In our implementation, $\bar L$ is computed by scaling
    the sampling-based empirical lower bound $\underline{L}$ by an inflation factor $\rho$.}
    and $\mathrm{vol}_{\max} \doteq \max\{
    \mathrm{vol}(W), \mathrm{vol}(V)\}$. For any $\delta \in (0,1)$
    and any $\epsilon > 0$ small enough such that the ball
    $\mathcal{B}_{\epsilon/(2 \bar L_{\max})}(x^*)$ is contained in
    the set on which the corresponding extremum is attained, if
    \begin{equation} \label{eq:N_sampling_lipschitz}
        N \;\ge\; \ceil{ \frac{\mathrm{vol}_{\max}}{\omega_n} \left(
            \frac{2 \bar L_{\max}}{\epsilon} \right)^{\! n} \log
        \frac{4}{\delta} },
    \end{equation}
    then $\mathcal{T}^{\psi}_{W,V} - \epsilon \;\le\;
    \tilde{\mathcal{T}}^{\psi}_{W,V} \;\le\;
    \mathcal{T}^{\psi}_{W,V},
    \forall \mathcal{T} \in \{\mathcal{L}, \mathcal{A}\}$
    hold jointly with probability at least $1 - \delta$.
\end{lemma}
\vspace{-0.5em}
\begin{proof}
    The upper bound holds deterministically since the empirical sup
    and inf satisfy $\tilde{\overline{h}}_S \le \overline{h}_S$ and
    $\tilde{\underline{h}}_S \ge \underline{h}_S$ for any continuous
    $h$ and compact $S$. For the lower bound, let $h_{\mathcal{L}}
    \doteq \log|\psi|$ and $h_{\mathcal{A}} \doteq \angle\psi$, which are both
    Lipschitz under Assumption \ref{assumption:setup}(iii). For each
    $\mathcal{T} \in \{\mathcal{L}, \mathcal{A}\}$, define the corresponding
    extremum estimation events
    \begin{equation*}
        \mathcal{E}_{\mathcal{T}, V} \doteq \bigl\{
            \tilde{\overline{h_\mathcal{T}}}_V \ge \overline{h_\mathcal{T}}_V -
        \tfrac{\epsilon}{2} \bigr\}, \quad
        \mathcal{E}_{\mathcal{T}, W} \doteq \bigl\{
            \tilde{\underline{h_\mathcal{T}}}_W \le \underline{h_\mathcal{T}}_W +
        \tfrac{\epsilon}{2} \bigr\}.
    \end{equation*}
    Applying Lemma \ref{lemma:extremum_sampling} to $h_\mathcal{T}$ on $V$
    and to $-h_\mathcal{T}$ on $W$ at tolerance $\epsilon/2$ and risk
    $\delta/4$ bounds the failure probability of each of the four
    events by $\delta/4$.
    Replacing $\bar L_\mathcal{T},
    \mathrm{vol}(\cdot)$ by their maxima in
    \eqref{eq:N_generic_extremum} yields
    \eqref{eq:N_sampling_lipschitz}, then $N$ satisfies all four
    conditions simultaneously. On $\mathcal{E}_{\mathcal{T}, V} \cap
    \mathcal{E}_{\mathcal{T}, W}$, subtracting the two inequalities gives
    $\tilde{\mathcal{T}}^{\psi}_{W,V} \ge
    \mathcal{T}^{\psi}_{W,V} - \epsilon$ for $\mathcal{T}$. A union
    bound over the four failure complements gives joint probability
    at least $1 - \delta$, on which
    the lemma's claim
    holds
    simultaneously for both $\mathcal{T} \in \{\mathcal{L}, \mathcal{A}\}$.
\end{proof}
\begin{theorem} \label{theorem:hausdorff}
    Suppose $V, W \subseteq X$ are non-empty compact sets and let
    $\tilde{I}_{V,W}$ denote the empirical reach-time interval from
    $V$ to $W$ estimated from principal eigenpairs
    $\{(\tilde\lambda_i, \tilde\psi_i)\}_{i=1}^m$ under Assumption
    \ref{assumption:setup}, with $\tilde\psi_i = \varepsilon_i
    \psi_i$. Define
    \begin{equation*}
        \Lambda^{\mathcal{L}}_i \doteq \Real(\lambda_i), \
        \Lambda^{\mathcal{A}}_i \doteq \Imag(\lambda_i), \
        \Delta_{\tau}^{\varepsilon_i} \doteq \max\bigl\{
            |\mathcal{T}^{\varepsilon_i}_{V,W}|,\,
        |\mathcal{T}^{\varepsilon_i}_{W,V}| \bigr\},
    \end{equation*}
    for $\tau \in \{\mathcal{L}, \mathcal{A}\}$, and
    $\Delta_{\tau}^{\varepsilon} \doteq \max_{i=1}^m
    \Delta_{\tau}^{\varepsilon_i}$.
    For any tolerance $\epsilon > 0$ satisfying the smallness
    condition of Lemma \ref{lemma:impact of sampling} and risk
    $\delta \in (0,1)$, if
    \begin{equation} \label{eq:N_theorem_hausdorff}
        N \;\ge\; \ceil{ \frac{\mathrm{vol}_{V,W}}{\omega_n} \left(
            \frac{2 \bar L_{\max}}{\epsilon} \right)^{\! n} \log
        \frac{4m}{\delta} }
    \end{equation}
    with $\mathrm{vol}_{V,W} \doteq \max\{\mathrm{vol}(V),
    \mathrm{vol}(W)\}$, then
    \begin{equation*}
        \mathbb{P}\bigl\{ d_H(\hat{I}_{V,W}, \tilde{I}_{V,W}) \le
        \Delta \bigr\} \;\ge\; 1 - \delta,
    \end{equation*}
    with $\Delta \doteq \max_{\tau \in \{\mathcal{L},\mathcal{A}\}}
    (\epsilon + \Delta_{\tau}^{\varepsilon})/\min_i
    |\Lambda^{\tau}_i|$.
\end{theorem}
\begin{proof}
    For each $i \in \{1, \dots, m\}$, let $\mathcal{E}_i$ denote
    the event under which both Lemma
    \ref{lemma:impact of inaccurate eigenpairs} and Lemma
    \ref{lemma:impact of sampling} hold for $(\tilde\lambda_i,
    \tilde\psi_i)$ on $(V,W)$, and let $\mathcal{E} :=
    \bigcap_{i=1}^m \mathcal{E}_i$.
    On event $\mathcal{E}$, the Hausdorff distance for the reach-time interval estimation is essentially
    the maximum endpoint deviations of the reach-time intervals under the inaccurate eigenpairs.
    For each possible ordering $(P,Q) \in \{(V,W),(W,V)\}$, the endpoint deviation can be bounded as
    \begin{align} \label{eq:endpoint deviation bound}
        \biggl| \frac{\sum_i \alpha_i
            (\tilde{\mathcal{T}}^{\tilde\psi_i}_{P,Q} -
        \mathcal{T}^{\psi_i}_{P,Q})}{\sum_i \alpha_i
        \Lambda^{\mathcal{T}}_i} \biggr|
        &= \biggl| \frac{\sum_i \alpha_i \Lambda^{\mathcal{T}}_i
            (\tilde{\mathcal{T}}^{\tilde\psi_i}_{P,Q} -
        \mathcal{T}^{\psi_i}_{P,Q}) / \Lambda^{\mathcal{T}}_i}{\sum_i \alpha_i
        \Lambda^{\mathcal{T}}_i} \biggr| \notag \\
        \leq \max_i &\frac{|\tilde{\mathcal{T}}^{\tilde\psi_i}_{P,Q} -
        \mathcal{T}^{\tilde\psi_i}_{P,Q}| + |\mathcal{T}^{\tilde\psi_i}_{P,Q} - \mathcal{T}^{\psi_i}_{P,Q}|}{| \Lambda^{\mathcal{T}}_i |}.
    \end{align}
    On $\mathcal{E}_i$, applying Lemma
    \ref{lemma:impact of sampling} to $\tilde\psi_i$ at tolerance
    $\epsilon$ and risk $\delta/m$, together with Lemma
    \ref{lemma:impact of inaccurate eigenpairs} for $\tilde\psi_i =
    \varepsilon_i \psi_i$ yields
    \begin{equation*}
        \bigl| \tilde{\mathcal{T}}^{\tilde\psi_i}_{P,Q} -
        \mathcal{T}^{\tilde\psi_i}_{P,Q} \bigr| \le \epsilon, \qquad
        \bigl| \mathcal{T}^{\tilde\psi_i}_{P,Q} -
        \mathcal{T}^{\psi_i}_{P,Q} \bigr| \le
        \Delta_{\mathcal{T}}^{\varepsilon_i}.
    \end{equation*}
    Substituting both bounds into \eqref{eq:endpoint deviation bound} gives,
    \begin{equation*}
        \biggl| \frac{\sum_i \alpha_i
        \tilde{\mathcal{T}}^{\tilde\psi_i}_{P,Q}}{\sum_i \alpha_i
        \Lambda^{\mathcal{T}}_i} - \frac{\sum_i \alpha_i
        \mathcal{T}^{\psi_i}_{P,Q}}{\sum_i \alpha_i
        \Lambda^{\mathcal{T}}_i} \biggr|
        \;\le\; \frac{\epsilon +
        \Delta_{\mathcal{T}}^{\varepsilon}}{\min_i
        |\Lambda^{\mathcal{T}}_i|} = \Delta.
    \end{equation*}
    Taking the maximum over $\mathcal{T} \in \{\mathcal{L}, \mathcal{A}\}$
    and using $\tilde{I}_{V,W} \subseteq \tilde{I}^{mag}_{V,W} \cap
    \tilde{I}^{phase}_{V,W}$ gives $d_H(\hat{I}_{V,W},
    \tilde{I}_{V,W}) \le \Delta$ on $\mathcal{E}$.
    Since event $\mathcal{E}$ is only a sufficient condition to guarantee the
    bounded Hausdorff distance, thus
    \begin{align}
        &\mathbb{P}\left\{ d_{H} (\tilde{I}(\lambda,\psi),\hat{I}(\lambda,\psi))
        \leq \Delta \right\}
        \geq \mathbb{P}\left( \bigcap_{i=1}^{m} \mathcal{E}_{i} \right) \notag \\
        &= 1- \mathbb{P}\left( \bigcup_{i=1}^{m} \mathcal{E}_{i}^{c} \right)
        \geq 1 - \sum_{i=1}^{m} \mathbb{P}(\mathcal{E}_{i}^{c}). \label{eq:probl_ineq}
    \end{align}
    Applying Lemma \ref{lemma:impact of sampling} to each
    $\tilde\psi_i$ at tolerance $\epsilon$ and risk $\delta/m$, the
    minimal samples amount in \eqref{eq:N_sampling_lipschitz} reduces to
    \eqref{eq:N_theorem_hausdorff} and yields
    $\mathbb{P}(\mathcal{E}_i^c) \le \delta/m$, so
    \eqref{eq:probl_ineq} gives $\mathbb{P}\{d_H \le \Delta\} \ge 1
    - \delta$.
\end{proof}
\color{black}

\section{Experiments}
In this section, we present numerical experiments to validate the proposed framework,
analyze its convergence properties, and highlight its key advantages over conventional
set-based methods.
The principal Koopman eigenpairs are learned from data
with ResDMD using polynomial basis functions in our implementation.


\begin{example}
    (System with known eigenfunctions)
    Consider a 2D nonlinear system defined by
    $$
    \begin{bmatrix}\dot{x}_{1} \\ \dot{x}_{2}
    \end{bmatrix} =
    \left[\nabla \Psi(x)\right]^{-1}
    \begin{bmatrix}-1&0 \\ 0&2.5
    \end{bmatrix} \Psi(x), $$
    where the analytical principal eigenfunctions at the unstable equilibrium $(0,0)$ are
    $\Psi(x) = [\psi_{1}(x), \psi_{2}(x)]^\top$, with $\psi_{1}(x) = x_{1}^{2}+2x_{2}+x_{2}^{3}$
    and $\psi_{2}(x)=x_{1}+\sin(x_2)+x_{1}^{3}$ for $\lambda_{1}=-1, \lambda_{2}=2.5$.
    Using 1000 trajectories uniformly initialized in $[-0.5, 2.5] \times [-1.5, 1.5]$ with simulations
    step $\Delta_t = 0.05$ for 10 steps,
    we apply the ResDMD algorithm to learn the
    approximated
    eigenfunctions
    $\tilde{\psi}_1(x)$ and $\tilde{\psi}_2(x)$.
    The task here is to verify reachability from an initial set $X_0=\{ x \mid h_{0.05, 1.15, 1, 2, 0.05}(x) \leq -0.1 \}$
    to a target set $X_F=\{ x \mid h_{1.85, -0.75, 5, 8, 0.1}(x) \leq -0.7 \}$, where
    $$ h_{c_1,c_2,a,b,s}(x) = -\left[ 1 - \frac{x_{1}-c_1}{3} + a z_2^5 + b z_1^3 \right] e^{-(z_1^2 + z_2^2)}$$ with
    $z_i = (x_i - c_i)/s$.
    Our method yields an
    empirical
    reach-time bound $\tilde{I} = [0.70, 0.97]$,
    which successfully verifies the reachability as confirmed in Fig. \ref{fig:rv_combined}.
    To validate the probabilistic guarantees of Theorem \ref{theorem:hausdorff},
    we evaluates 1000 independent trials across varying tolerances $\epsilon \in
    \{0.05, 0.02\}$ and sample sizes $N \in \{N^\star, 2N^\star\}$, where
    $N^\star$ is the minimum size required for risk $\delta=0.01$.
    As shown in Fig. \ref{fig:ex22}, all Hausdorff distances remain strictly below the theoretical bounds,
    thus robustly satisfying the prescribed 99\% confidence guarantee with $\delta = 0.01$.
    Additionally, decreasing $\epsilon$ tightens both the theoretical bound
    and the empirical error spread, while doubling $N$ further concentrates
    the errors toward zero, demonstrating improve both parameters can
    enhance estimation precision.

    \begin{figure}[htbp!]
        \centering
        \begin{subfigure}[t]{0.49\linewidth}
            \centering
            \includegraphics[width=\linewidth]{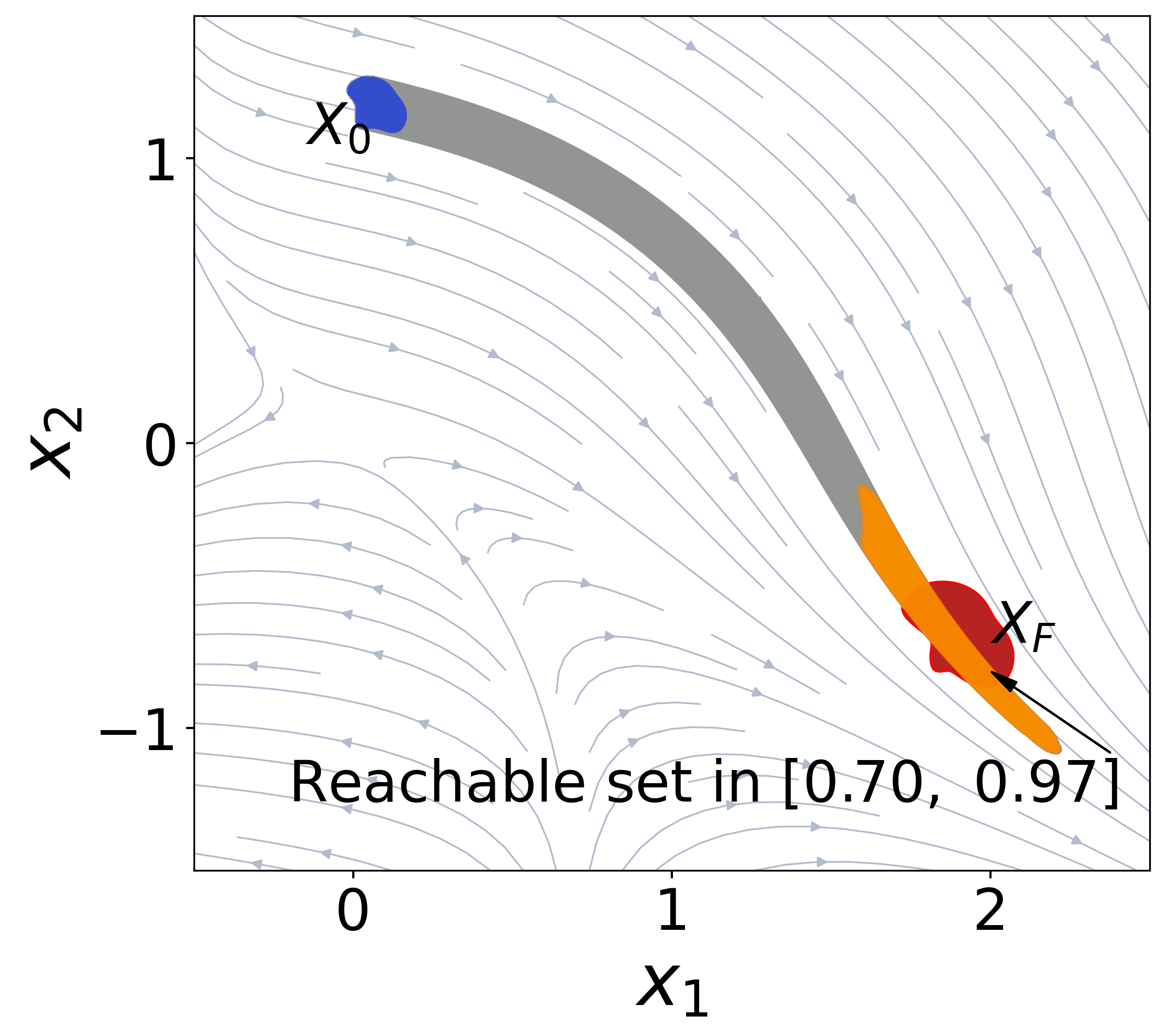}
            \caption{Example 1, $\tilde{I}=[0.70,0.97]$.}
            \label{fig:rv_nl_eig}
        \end{subfigure}\hfill
        \begin{subfigure}[t]{0.49\linewidth}
            \centering
            \includegraphics[width=\linewidth]{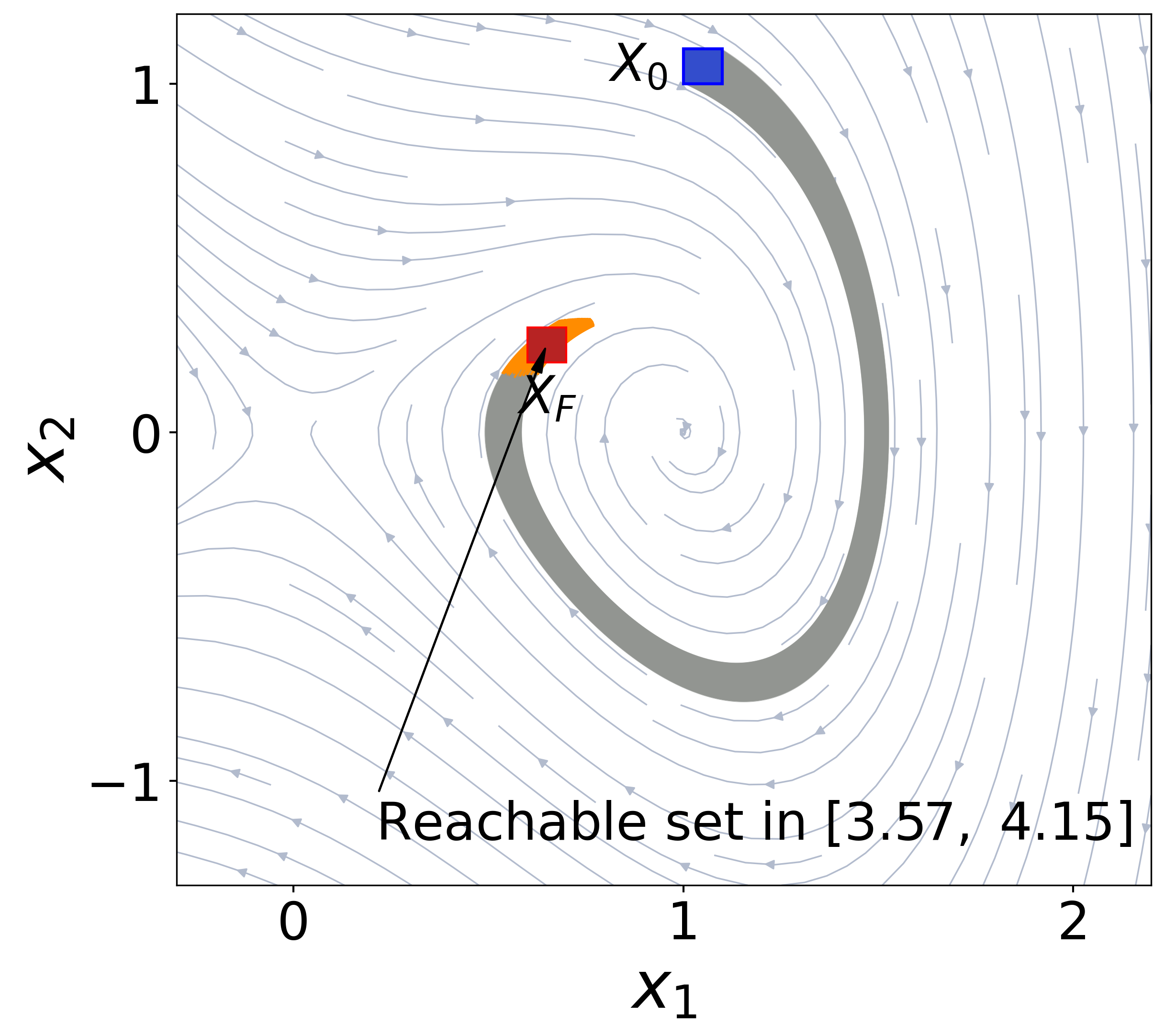}
            \caption{Example 2, $\tilde{I}=[3.57,4.15]$.}
            \label{fig:duff_rv}
        \end{subfigure}
        \caption{Reachable sets from $X_0$ for Examples 1 and 2 over
            the estimated reach-time bounds (orange), alongside the full
        reachable sets up to the lower endpoint (gray).}
        \label{fig:rv_combined}
    \end{figure}


    \begin{figure}[htbp!]
        \centering
        \includegraphics[width=\linewidth]{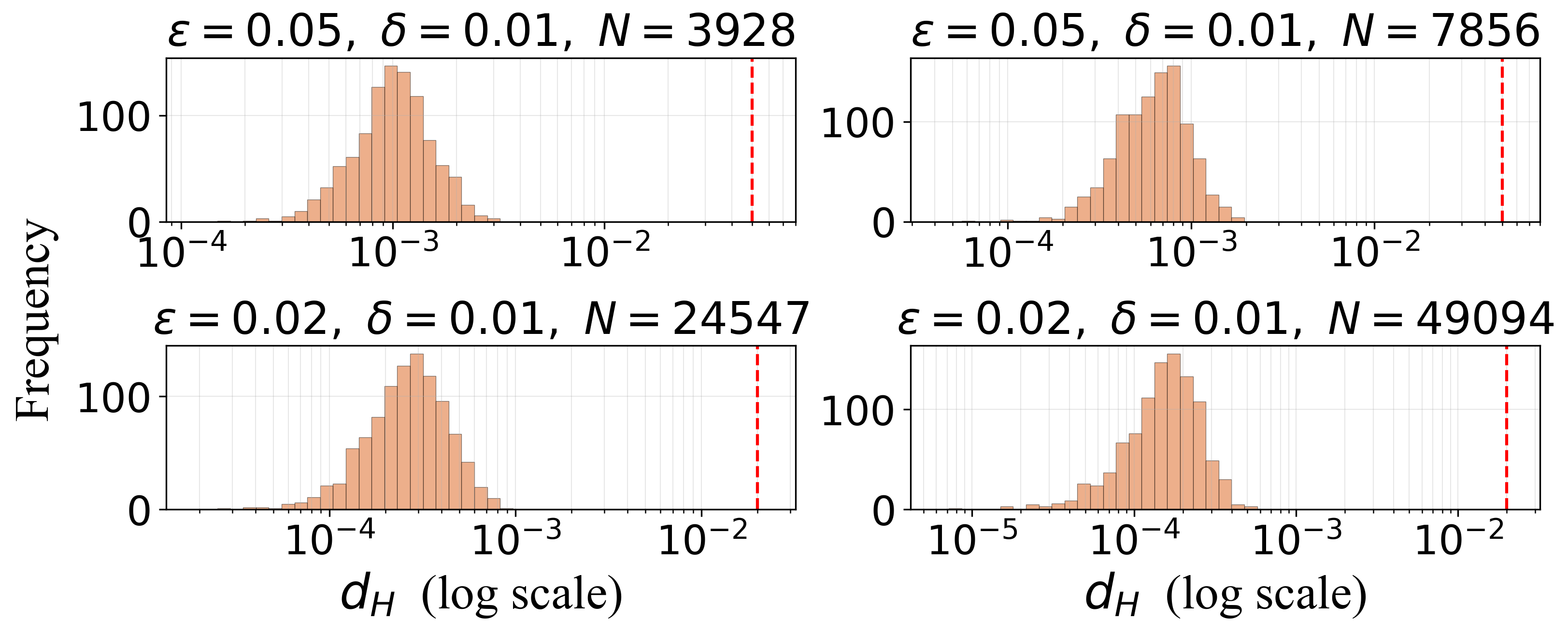}
        \caption{Hausdorff distances over 1000 trials per configuration. Dashed red lines denote theoretical bounds.}
        \label{fig:ex22}
    \end{figure}
\end{example}

Next, we investigate the convergence of our framework with respect to
the quality of the data-driven inputs.


\begin{example}
    (Duffing's oscillator) Given the nonlinear dynamics
    \[
        \left[
            \begin{array}{c}
                \dot{x}_1 \\
                \dot{x}_2
        \end{array}\right] = \left[
            \begin{array}{c}
                {x}_2                  \\
                -0.5x_2 - x_1(x_1^2-1)
        \end{array}\right],
    \]
    with stable equilibrium points at $(\pm 1,0)$ and a saddle
    equilibrium point at $(0,0)$.
    For reachability verification, we consider an initial set $X_0 =
    [1.0,1.1]\times[1.0,1.1]$ and a target set $X_F = [0.6,0.7]\times[0.2,0.3]$.
    A high-fidelity baseline reach-time bound $I=[3.57, 4.15]$ is
    established by
    evaluating
    the true principal
    eigenfunctions via the path-integral method
    \citep{deka2023path}, as validated in Fig. \ref{fig:duff_rv}.
    To validate Theorem~\ref{theorem:hausdorff}
    under varying qualities of eigenfunction approximations, we fix
    $(\epsilon,\delta)=(0.02,0.05)$,
    and modulate the approximation quality
    across basis degrees $\{8,10,12,14\}$ and trajectory
    numbers $\{200,500,1000,2000\}$.
    bound and the empirical Hausdorff distance. As illustrated in Fig.~\ref{fig:hd_cov},
    for all $1000$ independent trials per configuration, the margin $\Delta-d_H$
    remains non-negative exceeding the prescribed $95\%$ confidence level,
    demonstrating that the probabilistic bounds
    hold robustly regardless of the underlying data-driven approximation quality.
    Notably, eigenfunction approximations using
    polynomials of degree 8 outperform other bases for certain
    settings, as ResDMD captures average trajectory consistency rather than
    local pointwise accuracy.
    This empirically validates a key insight of Theorem \ref{theorem:hausdorff},
    that is, the estimated reach-time bounds is highly sensitive
    to the worst-case pointwise error that captured by the $L_\infty$-norm.

    \begin{figure}[htbp!]
        \centering
        \includegraphics[width=\linewidth]{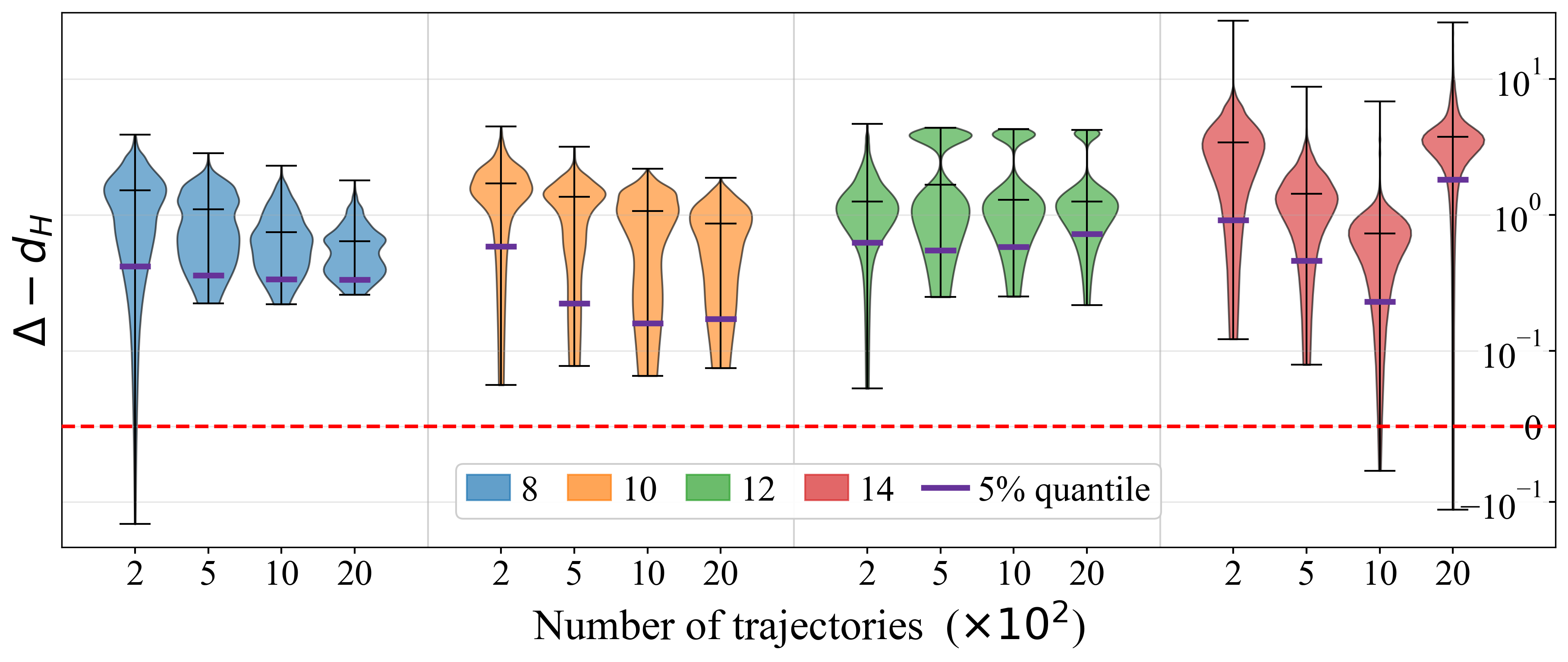}
        \caption{
            Distributions of $\Delta - d_H$ across varying
            settings.
        }
        \label{fig:hd_cov}
    \end{figure}


\end{example}
Finally,
we compare our method
against a
Koopman-based reachability analysis approach
developed in \citep{bak2025reachability} that relies on
set-propagation technique.

\begin{example}
    (Roessler attractor) Consider the dynamics
    \begin{align*}
        \begin{bmatrix}
            \dot{x}_1 \\
            \dot{x}_2 \\
            \dot{x}_3
        \end{bmatrix} =
        \begin{bmatrix}
            -x_2 - x_3 \\
            x_1 + 0.2 x_2 \\
            0.2 + x_3(x_1-5.7)
        \end{bmatrix}.
    \end{align*}
    We formulate the task to verify no trajectories starting from the
    initial set $X_0 = [-0.5, 0.5] \times [-9.0, -8.0] \times [-0.5,
0.5]$ can reach the target set $X_F= 10.5, 11] \times [-4.4, -3.9]
\times [-0.6, -0.1]$ within a given time horizon $[0,1]$.
The competitor method lifts the 3-dimensional state into a
73-dimensional observable space, computes the reachable set of the
resulting linear system using Polynomial Zonotope,
the computed set is then projected back to the original state space for verification.
As depicted in Fig. \ref{fig:roessler_rv}, the set-propagation method
yields
a conservative over-approximated
that intersects with the
target set,
rendering the unreachability verification inconclusive.
In contrast, our framework leverages
the principal Koopman eigenpairs learned from trajectory data to
bypass the conservative and expensive geometric operations, and our
estimated reach-time bound is empty with high
probability. This conclusion is validated by the simulated
trajectories as shown in Fig. \ref{fig:roessler_rv}.

\begin{figure*}[htbp!]
    \centering
    \includegraphics[width=0.95\linewidth]{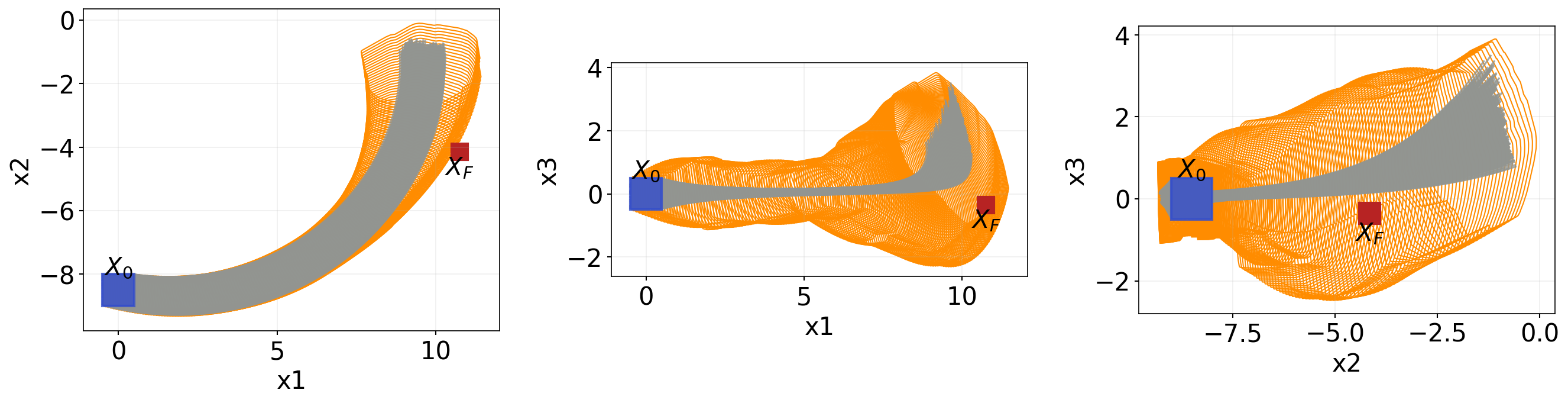}
    \caption{Reachability verification of the  Roessler attractor system.}
    \label{fig:roessler_rv}
\end{figure*}
\end{example}


\section{Conclusions}

This paper presents
a novel data-driven framework for
reachability verification for unknown dynamical systems
using the
Koopman spectrum.
Our main contribution is propagating model uncertainty from data into
a formal probabilistic guarantee on the time-to-reach bounds.
Future work will incorporate Koopman eigenvalue
uncertainty into the analysis and extend this framework to control synthesis
for safety-critical systems.


\section{DECLARATION OF GENERATIVE AI AND AI-ASSISTED TECHNOLOGIES IN THE WRITING PROCESS}

During the preparation of this work, the authors used Gemini for linguistic
polishing, and Claude Code to assist numerical experiments. After using these
tools, the authors reviewed and edited the content as needed and take full
responsibility for the content of the publication.

\bibliography{refs}             




\end{document}